\newcommand{\mysize}{\small}%{\normalsize}%
\newcommand{\myspace}{0cm}%{-0.1cm}%
\newcommand{\vct}[1]{\boldsymbol{#1}}
\theoremstyle{plain}
\newtheorem{thm}{Theorem}
\newtheorem{prop}[thm]{Proposition}
\begin{document}

\title{Cellular Systems with Full-Duplex Compress-and-Forward Relaying and Cooperative Base Stations}

\author{\authorblockN{Oren Somekh\authorrefmark{1}, Osvaldo Simeone\authorrefmark{2},
H. Vincent Poor\authorrefmark{1}, and Shlomo Shamai
(Shitz)\authorrefmark{3}}
%EndAName
\authorblockA{\authorrefmark{1}
Department of Electrical Engineering, Princeton University,
Princeton, NJ 08544, USA, {\{orens, poor\}@princeton.edu}}
\authorblockA{\authorrefmark{2}
CWCSPR, Department of Electrical and Computer Engineering, NJIT,
Newark, NJ 07102, USA, {osvaldo.simeone@njit.edu}}
\authorblockA{\authorrefmark{3}
Department of Electrical Engineering, Technion, Haifa 32000,
Israel, {sshlomo@ee.technion.ac.il}}}

%Email: \mbox{orens, poor}@princeton.edu \maketitle
%, Email: simeone@njit.edu
%, Email: sshlomo@ee.technion.ac.il

\maketitle

\begin{abstract}
In this paper the advantages provided by multicell processing of
signals transmitted by mobile terminals (MTs) which are received
via dedicated relay terminals (RTs) are studied. It is assumed
that each RT is capable of full-duplex operation and receives the
transmission of adjacent relay terminals. Focusing on intra-cell
TDMA and non-fading channels, a simplified relay-aided uplink
cellular model based on a model introduced by Wyner is considered.
Assuming a nomadic application in which the RTs are oblivious to
the MTs' codebooks, a form of distributed compress-and-forward
(CF) scheme with decoder side information is employed. The
per-cell sum-rate of the CF scheme is derived and is given as a
solution of a simple fixed point equation. This achievable rate
reveals that the CF scheme is able to completely eliminate the
inter-relay interference, and it approaches a ``cut-set-like''
upper bound for strong RTs transmission power. The CF rate is also
shown to surpass the rate of an amplify-and-forward scheme via
numerical calculations for a wide range of the system parameters.
%Comparison to an amplify-and-forward scheme via numerical
%calculations also demonstrates the superiority of the CF scheme
%for a wide range of the system parameters.
\end{abstract}

\vspace{\myspace}
\section{Introduction}
\vspace{\myspace}

Techniques for providing high data rate services and better
coverage in cellular mobile communications are currently being
investigated by many research groups. In this paper, we study the
combination of two cooperation-based technologies that are
promising candidates for achieving such goals, extending previous
work in \cite{Simeone-Somekh-BarNess-Spagnolini-WCOM07}\
-\nocite{Simeone-Somekh-Barness-Spagnolini-IT08}\nocite{Simeone-Somekh-Barness-Poor-Shamai-Allerton07}
\cite{Somekh-Simeone-Poor-Shamai-ISIT2007}. The first is relaying,
whereby the signal transmitted by a mobile terminal (MT) is
forwarded by a dedicated relay terminal (RT) to the intended base
station (BS) \cite{Lin-Hsu-Infocom00}.
% (see also
%\cite{Pabst-Walke-Schultz-Herhold-Yanikomeroglu-Mukherjee-Viswanathan-Lott-Sirwas-Falconer-Fettweis-COMMAG04}
%for a more recent account).
%The throughput of such hybrid networks has recently been studied
%in the limit of asymptotically many nodes
%\cite{Zemlianov-Veciana-JSAC05}\cite{Liu-Liu-Towsely-INFOCOM03}.
%Moreover, information theoretic characterization of related
%single-cell scenarios has been reported in
%\cite{Kramer-Gastpar-Gupta-IT05}.
The second technology of
interest here is multicell processing (MCP), which allows the BSs
to jointly decode the received signals, equivalently creating a
distributed receiving antenna array
\cite{Zhou-Zhao-Xu-Yao-COMMAG03}. The performance gain provided by
this technology within a simplified cellular model was first
studied in \cite{Wyner-94}
%\cite{Hanly-Whiting-Telc-1993}
%, and then extended to include fading channels by
%\cite{Somekh-Shamai-2000}
, under the assumption that BSs are
connected by an ideal backbone (see
\cite{Somekh-Simeone-Barness-Haimovich-Shamai-BookChapt-07} for a
survey on MCP).

Recently, the interplay between these two technologies has been
investigated for amplify-and-forward (AF) and decode-and-forward
(DF) protocols in
\cite{Simeone-Somekh-BarNess-Spagnolini-WCOM07}\cite{Somekh-Simeone-Poor-Shamai-ISIT2007}
and
\cite{Simeone-Somekh-Barness-Spagnolini-IT08}\cite{Simeone-Somekh-Barness-Poor-Shamai-Allerton07},
respectively. The basic framework employed in these works is the
Wyner uplink cellular model introduced in \cite{Wyner-94}.
According to the linear variant of this model, cells are arranged
in a linear geometry and only adjacent cells interfere with each
other. Moreover, inter-cell interference is described by a single
parameter $\alpha\in[0, 1]$, which defines the gain experienced by
signals travelling to interfered cells. Notwithstanding its
simplicity, this model captures the essential structure of a
cellular system and it provides insight into the system
performance.
%The RTs added to the basic Wyner model in
%\cite{Simeone-Somekh-BarNess-Spagnolini-WCOM07}\cite{Simeone-Somekh-Barness-Spagnolini-IT08}
%are assumed to operate in a half-duplex (HD) mode.

\begin{figure}[tb]
\begin{center}
\psfrag{A\r}{\scriptsize$\alpha$}\psfrag{B\r}{\scriptsize$\beta$}
\psfrag{G\r}{\scriptsize$\gamma$} \psfrag{M\r}{\scriptsize$\mu$}
\psfrag{E\r}{\scriptsize$\eta$}
%\psfrag{Xm\r}{\scriptsize$X_{m,n}$}
%\psfrag{Xmm1\r}{\scriptsize$X_{m-1,n}$}\psfrag{Xmp1\r}{\scriptsize$X_{m+1,n}$}
%\psfrag{Ym\r}{\scriptsize$Y_{m,n}$}\psfrag{Ymm1\r}{\scriptsize$Y_{m-1,n}$}\psfrag{Ymp1\r}{\scriptsize$Y_{m+1,n}$}
%\psfrag{Rm\r}{\scriptsize$R_{m,n}$}\psfrag{Rmm1\r}{\scriptsize$R_{m-1,n}$}\psfrag{Rmp1\r}{\scriptsize$R_{m+1,n}$}
%\psfrag{Z2m\r}{\scriptsize$Z^{(2)}_{m,n}$}\psfrag{Z2mm1\r}{\scriptsize$Z^{(2)}_{m-1,n}$}\psfrag{Z2mp1\r}{\scriptsize$Z^{(2)}_{m+1,n}$}
%\psfrag{Z1m\r}{\scriptsize$Z^{(1)}_{m,n}$}\psfrag{Z1mm1\r}{\scriptsize$Z^{(1)}_{m-1,n}$}\psfrag{Z1mp1\r}{\scriptsize$Z^{(1)}_{m+1,n}$}
%\psfrag{Cm\r}{\normalsize Cell $M$}\psfrag{Cmm1\r}{\normalsize
%Cell $M-1$}\psfrag{Cmp1\r}{\normalsize Cell $M+1$}
%\psfrag{Bss\r}{\normalsize BSs:}\psfrag{Mts\r}{\normalsize
%MTs:}\psfrag{Rts\r}{\normalsize RTs:}
\includegraphics[scale=0.45]{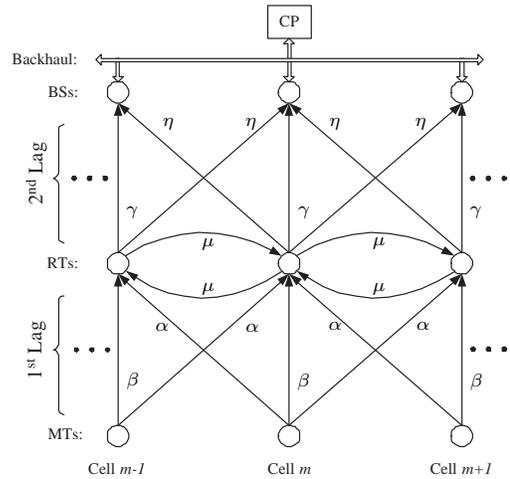}
\end{center}
\vspace{-0.5cm} \caption{Schematic diagram.} \label{fig: Wyner
relay network}
\end{figure}

In this work we adopt a similar setup to the one presented in
\cite{Somekh-Simeone-Poor-Shamai-ISIT2007}, in which dedicated
full-duplex (FD) RTs are added to the basic linear Wyner uplink
channel model and the signal path between adjacent RTs is
considered (i.e., inter-relay interference). 
With coverage extension in mind, we focus on distant users having no direct connection to the BSs.
Assuming a nomadic
application in which the RTs are oblivious to the MTs' codebooks,
a form of distributed compress-and-forward (CF) scheme with
decoder side information, similar to that of \cite{Gastpar-IT04}, is analyzed. It is noted that this scheme
resembles the single-user multiple-relay CF scheme considered in
\cite[Thm. 3]{Kramer-Gastpar-Gupta-IT05}. Focusing on a setup with
infinitely large number of cells, the achievable per-cell sum-rate
of the CF scheme is derived using the methods applied in \cite{Sanderovich-Somekh-Shamai-ISIT07}.
% combined with the approach of \cite{Gastpar-IT04}.
The achievable rate, which is
given as a solution of a simple fixed point equation, shows that
the CF scheme completely eliminates the inter-relay interferences.
Moreover the rate is shown to approach a ``cut-set-like'' upper
bound for strong RTs transmission power. Finally, the performance
of the CF scheme is compared numerically with the performance of
an AF scheme, recently reported in
\cite{Somekh-Simeone-Poor-Shamai-ISIT2007}, revealing the
superiority of the CF scheme for a wide range of the system
parameters.

\vspace{\myspace}
\section{System Model}
\vspace{\myspace}

We consider the uplink of a cellular system with a dedicated RT
for each transmitting MT. We focus on a scenario with no fading
and adopt a circular version of the linear cellular uplink channel
presented by Wyner \cite{Wyner-94}. RTs are added to the basic
Wyner model following the analysis in
\cite{Somekh-Simeone-Poor-Shamai-ISIT2007} (see Fig. \ref{fig:
Wyner relay network} for a schematic diagram of a single cell
within the setup and its inter-cell interaction).

The system includes $M$ identical cells arranged on a circle, with
a single MT active in each cell at a given time (intra-cell TDMA
protocol), and a dedicated single RT to relay the signals from the
MT to the BS (there is no direct connection between MTs and BSs).
Accordingly, each RT receives the signals of the local MT, the two
adjacent MTs, and the two adjacent RTs, with channel power gains
$\beta^2$, $\alpha^2$, and $\mu^2$ respectively. Likewise, each BS
receives the signals of the local RT, and the two adjacent RTs,
with channel power gains $\eta^2$ and $\gamma^2$ respectively. The
received signals at the RTs and BSs are affected by i.i.d.
zero-mean complex Gaussian additive noise processes with powers
$\sigma_1^2$ and $\sigma_2^2$, respectively. It is assumed that
the MTs use independent randomly generated complex Gaussian
codebooks with zero-mean and power $P$, whereas the RTs are
subjected to an average transmit power constraint $Q$. The RTs are
assumed to be oblivious of the MTs codebooks (nomadic
application), and that no cooperation among MTs is allowed. In
addition, the RTs are assumed to be capable of receiving and
transmitting simultaneously (i.e., perfect echo-cancellation). It
is noted that the propagation delays between the different nodes
of the system are negligible with respect to the symbol duration.
Finally, it is assumed that the BSs are connected to a central
processor (CP) via an ideal backhaul network, and that the channel
path gains and noise powers are known to the BSs, MTs, and CP.

%The main differences between the current model and the model
%presented in \cite{Simeone-Somekh-BarNess-Spagnolini-WCOM07}
%\cite{Simeone-Somekh-Barness-Spagnolini-IT08}, are: (a)
%full-duplex operation at the relays (which introduces the relaying
%delay $\lambda$); (b) no direct connection between the MTs and the
%BSs; and (c) the RTs receives also the signals of the two adjacent
%MTs.

\vspace{\myspace}
\section{Preliminaries}
\vspace{\myspace}
\subsection{Wyner's Model - Sum-Rate Capacities}\label{sec: Wyner
capacities} \vspace{\myspace}

Putting aside the inter-relay interference paths
and the lack of joint MCP among the RTs, the mesh network of Fig.
\ref{fig: Wyner relay network} is composed of two Wyner models (or
two ``Wyner lags") \cite{Wyner-94}. The close relations of the
current setup and the Wyner model renders the following
definitions useful in the sequel.

The per-cell sum-rate capacity of the linear (or circular) Wyner
uplink channel with infinitely large number of cells (${\mysize
M\rightarrow\infty}$), no user cooperation, optimal MCP,
signal-to-noise ratio (SNR) $\rho$, inter-cell interference factor
$a$ (e.g. $\alpha$ or $\eta$ in Fig. \ref{fig: Wyner relay
network}), and local path gain $b$ (e.g. $\beta$ or $\gamma$ in
Fig. \ref{fig: Wyner relay network}), is given by \cite{Wyner-94}
\begin{equation}\mysize\label{eq: Wyner rate no WF}
R_{\mathrm{w}}(a,b,\rho)=\int_0^1\log_2\left(1+\rho
H(f)^2\right)df\ .
\end{equation}
where $H(f)=b+2a\cos 2\pi f$. When transmitter full cooperation is
allowed the per-cell sum rate capacity of the above channel is
achieved by ``waterfilling'' solution and is given by
\begin{equation}\mysize\label{eq: Wyner rate WF}
\begin{aligned}
R^{\mathrm{wf}}_{\mathrm{w}}(a,b,\rho)&=
\int_0^1\log_2\left(1+\left(\nu-\frac{1}{H(f)^2}\right)^+H(f)^2\right)df\\
&\quad\mathrm{s.t.}\
\int_0^1\left(\nu-\frac{1}{H(f)^2}\right)^+=\rho\ ,
\end{aligned}
\end{equation}
where $(x)^+=\min\{x,0\}$.

\vspace{\myspace}
\subsection{Upper Bound}
\vspace{\myspace}

Denoting $\rho_1=P/\sigma^2_1$ and $\rho_2=Q/\sigma^2_2$ as the
SNRs over the first ``MT-RT" and second ``RT-BS" lags,
respectively, we have the following bound. \vspace{-0.3cm}
\begin{prop}
The per-cell sum-rate of any scheme employed in the relay-aided
Wyner circular uplink channel with infinite number of cells
${\mysize M\rightarrow\infty}$ and no MT cooperation, is upper
bounded by
\begin{equation}\mysize\label{eq: Cut-set bound}
    R_{\mathrm{ub}} =
    \min\left\{R_\mathrm{w}(\alpha,\beta,\rho_1),R^{\mathrm{wf}}_\mathrm{w}(\eta,\gamma,\rho_2)\right\}\
    .
\end{equation}
\end{prop}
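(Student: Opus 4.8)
The plan is to recognize \eqref{eq: Cut-set bound} as the cut-set (max-flow/min-cut) bound for relay networks, evaluated on the two cuts suggested by the two-hop ``MT--RT / RT--BS'' structure of Fig.~\ref{fig: Wyner relay network}, and then to identify each cut value with one of the Wyner capacities of Section~\ref{sec: Wyner capacities}. Let $\vct{x}_{\mathrm m},\vct{x}_{\mathrm r}$ denote the length-$M$ vectors of MT and RT channel inputs and $\vct{y}_{\mathrm r},\vct{y}_{\mathrm b}$ the RT and BS outputs. First I would record that, since there is no direct MT--BS link, the channel factors as $p(\vct{y}_{\mathrm r},\vct{y}_{\mathrm b}\mid\vct{x}_{\mathrm m},\vct{x}_{\mathrm r})=p(\vct{y}_{\mathrm r}\mid\vct{x}_{\mathrm m},\vct{x}_{\mathrm r})\,p(\vct{y}_{\mathrm b}\mid\vct{x}_{\mathrm r})$, i.e. $\vct{y}_{\mathrm b}-\vct{x}_{\mathrm r}-(\vct{x}_{\mathrm m},\vct{y}_{\mathrm r})$ is Markov. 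The cut-set bound then gives, for the input law induced by any code (and after the usual Fano plus per-symbol averaging steps), that every achievable per-cell sum-rate $R$ satisfies both $R\le\frac1M I(\vct{x}_{\mathrm m};\vct{y}_{\mathrm r},\vct{y}_{\mathrm b}\mid\vct{x}_{\mathrm r})$ (the cut $\{\text{MTs}\}\mid\{\text{RTs},\text{BSs}\}$) and $R\le\frac1M I(\vct{x}_{\mathrm m},\vct{x}_{\mathrm r};\vct{y}_{\mathrm b})$ (the cut $\{\text{MTs},\text{RTs}\}\mid\{\text{BSs}\}$); it then suffices to bound each term separately.

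For the first cut I would use the Markov property to get $I(\vct{x}_{\mathrm m};\vct{y}_{\mathrm r},\vct{y}_{\mathrm b}\mid\vct{x}_{\mathrm r})=I(\vct{x}_{\mathrm m};\vct{y}_{\mathrm r}\mid\vct{x}_{\mathrm r})$, and then observe that conditioning on $\vct{x}_{\mathrm r}$ turns the inter-relay ($\mu$) contribution into a known quantity that subtracts out of $\vct{y}_{\mathrm r}$, leaving the mutual information of the pure circular Wyner MT--RT channel with interference $\alpha$ and local gain $\beta$. Since the MTs may not cooperate, their inputs are mutually independent with per-cell power $P$, so by ``conditioning reduces entropy'' together with the Gaussian maximum-entropy bound, $\frac1M I(\vct{x}_{\mathrm m};\vct{y}_{\mathrm r}\mid\vct{x}_{\mathrm r})$ is at most the non-cooperative Wyner sum-rate capacity, which by \eqref{eq: Wyner rate no WF} tends to $R_{\mathrm w}(\alpha,\beta,\rho_1)$ as $M\to\infty$.

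For the second cut the Markov property gives $I(\vct{x}_{\mathrm m},\vct{x}_{\mathrm r};\vct{y}_{\mathrm b})=I(\vct{x}_{\mathrm r};\vct{y}_{\mathrm b})$, the mutual information of the circular Wyner RT--BS channel with interference $\eta$ and local gain $\gamma$. The point I would stress here is that, although the CF scheme does \emph{not} allow the RTs to cooperate, the cut-set converse places no such restriction on the joint law of $\vct{x}_{\mathrm r}$ --- the only active constraint is the per-cell average power $Q$ --- so the relevant maximum of $\frac1M I(\vct{x}_{\mathrm r};\vct{y}_{\mathrm b})$ is the \emph{transmitter-cooperative} Wyner problem, whose optimum is the waterfilling solution \eqref{eq: Wyner rate WF}, i.e. $R^{\mathrm{wf}}_{\mathrm w}(\eta,\gamma,\rho_2)$ in the limit $M\to\infty$; non-Gaussian inputs are again dispatched by the maximum-entropy argument. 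Taking the minimum of the two bounds would yield \eqref{eq: Cut-set bound}.

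I expect the main obstacles to be conceptual rather than computational: first, getting the asymmetry right --- the MT cut is governed by the \emph{non-cooperative} Wyner capacity while the RT cut must be relaxed to the \emph{cooperative} (waterfilling) capacity, precisely because a converse cannot exploit the lack of RT cooperation; and second, the (standard) passage from the finite-$M$ circulant mutual informations to the asymptotic per-cell integrals in \eqref{eq: Wyner rate no WF}--\eqref{eq: Wyner rate WF}, which I would justify via the asymptotic eigenvalue distribution of circulant matrices as in \cite{Wyner-94}.
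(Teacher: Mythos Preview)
Your proposal is correct and follows exactly the paper's approach: the paper's proof is a two-line outline that invokes the two cut-sets (MTs vs.\ rest, and MTs+RTs vs.\ BSs), noting that the first cut is tightened to $R_{\mathrm w}(\alpha,\beta,\rho_1)$ by the no-MT-cooperation assumption (hence ``cut-set-like''), while the second is the full cooperative waterfilling rate $R^{\mathrm{wf}}_{\mathrm w}(\eta,\gamma,\rho_2)$. Your Markov/conditioning manipulations and the $M\to\infty$ circulant limit simply flesh out what the paper leaves implicit.
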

\vspace{\myspace}
\begin{proof}(outline)
The rate expression is easily derived by considering two cut-sets,
one separating the MTs from the RTs and the other separating the
RTs from the BSs (or CP). We refer to this bound as
``cut-set-like" bound since we also account for the assumption of
no MTs cooperation in the first lag.
\end{proof}
It is noted that the upper bound continues to hold even if we
allow multiple MTs to be simultaneously active in each cell
(assuming a total-cell transmit power of $P$). Since both
arguments of \eqref{eq: Cut-set bound} increase with SNR it is
easily verified that
$R_{\mathrm{ub}}\underset{\rho_1\rightarrow\infty}{\rightarrow}R^{\mathrm{wf}}_\mathrm{w}(\eta,\gamma,\rho_2)$
and that
$R_{\mathrm{ub}}\underset{\rho_2\rightarrow\infty}{\rightarrow}R_\mathrm{w}(\alpha,\beta,\rho_1)$.
%$R_{\mathrm{ub}}\underset{\rho_2\rightarrow\infty}{\rightarrow}R_\mathrm{w}(\alpha,\beta,\rho_1)\le
%R^{\mathrm{wf}}_\mathrm{w}(\alpha,\beta,\rho_1)$.
%\begin{equation}\mysize\label{eq: UB lage snr1}
%    R_{\mathrm{ub}}\underset{\rho_1\rightarrow\infty}{\rightarrow}R^{\mathrm{wf}}_\mathrm{w}(\eta,\gamma,\rho_2)\
%    ,
%\end{equation}
%and that
%\begin{equation}\mysize\label{eq: UB lage snr2}
%    R_{\mathrm{ub}}\underset{\rho_2\rightarrow\infty}{\rightarrow}R_\mathrm{w}(\alpha,\beta,\rho_1)\le
%R^{\mathrm{wf}}_\mathrm{w}(\alpha,\beta,\rho_1)\ .
%\end{equation}

\vspace{\myspace}
\subsection{Amplify-and-Forward Scheme}
\vspace{\myspace}

As a reference result, we consider the AF scheme with MCP analyzed
in \cite{Somekh-Simeone-Poor-Shamai-ISIT2007} for a similar
infinite setup. For the AF scheme we make an additional assumption
regarding the relaying delay, namely that the RTs amplify and
forward the received signals with a delay of $\lambda\ge 1$
symbols (an integer). Interpreting the cellular uplink channel
model with AF and MCP as a 2D linear time invariant system, and
applying the 2D extension of Szeg{\"o}'s theorem \cite{Wyner-94},
the following result is derived in
\cite{Somekh-Simeone-Poor-Shamai-ISIT2007}. \vspace{-0.3cm}
\begin{prop}\label{prop: MCP AF sum-rate}
An achievable per-cell sum-rate of AF relaying with optimal MCP
and no spectral shaping, employed in the relay-aided infinite
linear Wyner uplink channel, is given by
\begin{equation}\mysize\label{eq: MCP AF sum-rate}
R_{\mathrm{af}}=\int_0^{1}\log\left(\frac{A+B+\sqrt{(A+B)^2-C^2}}{B+\sqrt{B^2-C^2}}\right)df\
,
\end{equation}
where
\begin{equation*}\mysize\label{eq: A B C definitions}
\begin{aligned}
A&\triangleq P g^2(\beta+2\alpha\cos2\pi f)^2(\gamma+2\eta\cos2\pi f)^2\\
B&\triangleq\sigma_1^2 g^2(\gamma+2\eta\cos2\pi f)^2+\sigma_2^2(1+4g^2\mu^2\cos^22\pi f)\\
C&\triangleq 4\sigma_2^2 g\mu\cos2\pi f\ .
\end{aligned}
\end{equation*}
Furthermore, the optimal relay gain $g_\mathrm{o}$ is the unique
solution to the equation $\sigma^2_r(g)=Q$ where
\begin{equation}\mysize\label{eq: MCP relay power}
    \sigma^2_r(g)=\frac{(P\beta^2+\sigma^2_1)g^2}{\sqrt{1-(2\mu
g)^4}}+\frac{4P\alpha^2 g^2}{\sqrt{1-(2\mu g)^2}+1-(2\mu g)^2}
\end{equation}
is the relay output power.
\end{prop}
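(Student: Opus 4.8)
The plan is to turn the $M$-cell circular array into a family of decoupled scalar amplify-and-forward relay channels and then pass to the limit $M\to\infty$ with a Szeg\"o-type argument. First I would stack the per-cell signals into $M$-vectors: the MT--RT, RT--RT, and RT--BS links are then circulant matrices with first rows $(\beta,\alpha,0,\dots,0,\alpha)$, $(0,\mu,0,\dots,0,\mu)$, and $(\gamma,\eta,0,\dots,0,\eta)$, which the $M$-point DFT simultaneously diagonalizes with eigenvalues $h_1(f)=\beta+2\alpha\cos2\pi f$, $h_0(f)=2\mu\cos2\pi f$, and $h_2(f)=\gamma+2\eta\cos2\pi f$ at $f=k/M$, $k=0,\dots,M-1$. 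Since the DFT is unitary and the MT codewords and the additive noises are white and spatially i.i.d., their DFT images remain white and i.i.d.\ with the same powers $P$, $\sigma_1^2$, $\sigma_2^2$; thus the system decomposes, for each spatial frequency $f$, into a scalar time-invariant AF relay channel.

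Next I would resolve the inter-relay feedback. For fixed $f$ the relay-input $z$-transform obeys $R(z)=h_1 X(z)+h_0 T(z)+Z_1(z)$ with $T(z)=g z^{-\lambda}R(z)$; because $\lambda\ge1$ is an integer the loop is causal, and as long as $2\mu g<1$ (which I would identify as the feasibility/stability condition) it solves to $T(z)=\frac{g z^{-\lambda}}{1-h_0 g z^{-\lambda}}(h_1 X(z)+Z_1(z))$ and hence to $Y(z)=\frac{g h_2 z^{-\lambda}}{1-h_0 g z^{-\lambda}}(h_1 X(z)+Z_1(z))+Z_2(z)$, an LTI channel with effective colored noise. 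The achievable per-cell sum-rate is then $\int_0^1\!\int_0^1\log(1+\mathrm{SNR}(f,\theta))\,d\theta\,df$ by the (2D) Szeg\"o theorem; substituting the signal and noise powers and clearing the common factor $|1-h_0 g e^{-j2\pi\lambda\theta}|^2$ I expect the integrand to collapse to $\log\frac{A+B-C\cos2\pi\lambda\theta}{B-C\cos2\pi\lambda\theta}$ with $A$, $B$, $C$ exactly as stated. Using $\int_0^1 w(\cos2\pi\lambda\theta)\,d\theta=\int_0^1 w(\cos2\pi\theta)\,d\theta$ for integer $\lambda$, followed by the elementary identity $\int_0^1\log(p-q\cos2\pi\theta)\,d\theta=\log\frac{p+\sqrt{p^2-q^2}}{2}$ (legitimate since $2\mu g<1$ forces $B>|C|$, hence also $A+B>|C|$) applied to numerator and denominator, would give \eqref{eq: MCP AF sum-rate}.

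For the power normalization I would compute the transmit power of one RT, which by circular symmetry equals $\sigma_r^2(g)=\int_0^1\mathrm{Var}(T(f))\,df$. The inner variance is that of a white process of power $h_1^2 P+\sigma_1^2$ filtered by $g z^{-\lambda}/(1-h_0 g z^{-\lambda})$, i.e.\ $\frac{g^2(h_1^2 P+\sigma_1^2)}{2\pi}\int_0^{2\pi}|1-h_0 g e^{-j\lambda\omega}|^{-2}\,d\omega=\frac{g^2(h_1^2 P+\sigma_1^2)}{1-h_0^2 g^2}$, again using the integer $\lambda$ together with $\frac1{2\pi}\int_0^{2\pi}(1+r^2-2r\cos\omega)^{-1}\,d\omega=(1-r^2)^{-1}$. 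Expanding $h_1^2$, dropping the term odd in $\cos2\pi f$, and evaluating $\int_0^1(1-(2\mu g)^2\cos^2 2\pi f)^{-1}\,df$ and its $\cos^2 2\pi f$-weighted counterpart in closed form should reproduce \eqref{eq: MCP relay power}. Uniqueness of $g_{\mathrm o}$ then follows from monotonicity: each summand of $\sigma_r^2(g)$ is continuous and strictly increasing on $g\in[0,1/(2\mu))$, with $\sigma_r^2(0^+)=0$ and $\sigma_r^2(g)\to\infty$ as $g\uparrow 1/(2\mu)$, so $\sigma_r^2(g)=Q$ has exactly one root.

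The step I expect to be the main obstacle is handling the feedback correctly: pinning down the loop transfer function and its exact stability region $2\mu g<1$, and setting up the 2D linear-time-invariant description so the 2D Szeg\"o theorem genuinely applies and the mutual information of the finite circular network converges as $M\to\infty$ to the stated double integral rather than merely being bounded by it; once the per-$f$ LTI channel is in hand, the remaining frequency integrals are the standard Poisson-kernel evaluations and pose no real difficulty.
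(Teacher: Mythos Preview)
Your plan is sound and, as far as this paper is concerned, it is exactly the route the authors indicate: the sentence preceding the proposition says the result is obtained by ``interpreting the cellular uplink channel model with AF and MCP as a 2D linear time invariant system, and applying the 2D extension of Szeg{\"o}'s theorem,'' which is precisely your spatial-DFT diagonalization of the three circulants, resolution of the $gz^{-\lambda}/(1-h_0gz^{-\lambda})$ feedback loop, and the double frequency integral. Note, however, that the paper does \emph{not} actually prove Proposition~\ref{prop: MCP AF sum-rate} here; it is quoted as a reference result from \cite{Somekh-Simeone-Poor-Shamai-ISIT2007}, and the appendix titled ``Proof of Proposition~\ref{prop: MCP AF sum-rate}'' is a labeling slip---its content is the CF argument for Proposition~\ref{prop: CF rate}. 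So there is no in-paper proof to compare against beyond that one-line hint, and your proposal matches it.

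One small caution for when you carry out the relay-power integral: with $a=2\mu g$ your evaluation gives $\int_0^1(1-a^2\cos^22\pi f)^{-1}\,df=1/\sqrt{1-a^2}$, so the first summand comes out as $(P\beta^2+\sigma_1^2)g^2/\sqrt{1-(2\mu g)^2}$, whereas the displayed formula in the paper has a $\sqrt{1-(2\mu g)^4}$ there. Your second summand matches the paper exactly via $\int_0^1\frac{\cos^22\pi f}{1-a^2\cos^22\pi f}\,df=\frac{1}{\sqrt{1-a^2}+1-a^2}$. The discrepancy in the first term is almost certainly a typographical slip in the exponent, not a flaw in your derivation; do not let it derail you.
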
\vspace{-0.2cm}
It is shown in \cite{Somekh-Simeone-Poor-Shamai-ISIT2007} that the
optimal gain is achieved when the relays use their full power $Q$,
and that
$g_\mathrm{o}\underset{Q\rightarrow\infty}{\longrightarrow}
1/(2\mu)$. In addition, $R_{\mathrm{af}}$ is not interference
limited and it is independent of the actual RT delay value
$\lambda$.
%Finally, it is noted that an improvement on the rate of
%\eqref{eq: MCP AF sum-rate} may be achieved by performing spectral
%shaping (``waterfilling") in the frequency domain (not to be
%confused with the spatial frequency domain) at the MTs and/ or
%RTs.

\vspace{\myspace}
\section{Distributed Compress-and-Forward Scheme}\label{sec: Distributed Compress-and-Forward
Scheme}\vspace{\myspace}

\begin{figure}[tb]
\begin{center}
\includegraphics[scale=0.41]{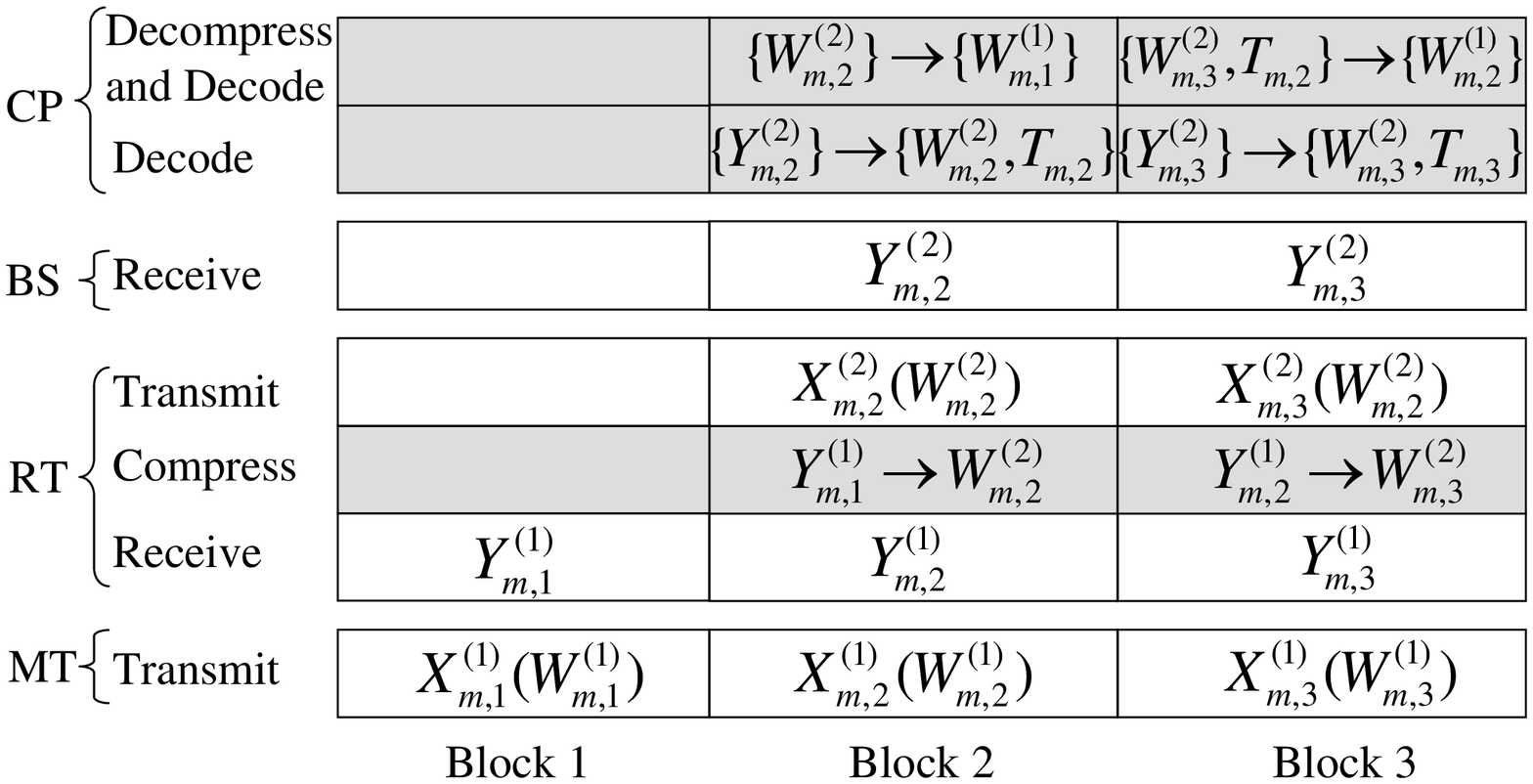}
\end{center}
\vspace{-0.3cm} \caption{CF scheme diagram.} \label{fig: CF
scheme}
\end{figure}

Here we describe the proposed CF-based transmission scheme, which
organizes transmission into successive blocks (or codewords) of
$N$ symbols, as sketched in Fig. \ref{fig: CF scheme}. It should
be remarked that transmission in the AF scheme presented in the
previous section spans only one block (with some $o(N)$ symbols
margin due to the delay $\lambda$ and the filter effective
response time). For this reason, while in the AF scheme the RTs
need to maintain only symbol synchronization, for the CF scheme to
be discussed below, block synchronization is also necessary.

%We start by noting that the CF scheme works in blocks (or codeword
%of $N$ symbols). As opposed to the AF scheme whereas the RTs
%maintain only symbol synchronization, applying the CF scheme the
%RTs maintain also a block  synchronization.

With ${{\small (\cdot)^{(1)},\ (\cdot)^{(2)}}}$ denoting the
association to the first \textquotedblleft MT-RT" and second
\textquotedblleft RT-BS" lags, respectively, the received signal
at the $m$th RT in an arbitrary symbol of the $n$th block is
\begin{equation}
{\small Y_{m,n}^{(1)}=\beta X_{m,n}^{(1)}+\alpha(X_{[m-1],n}^{(1)}%
+X_{[m+1],n}^{(1)})+T_{m,n}+Z_{m,n}^{(1)}\ ,}\label{eq: RT received signal}%
\end{equation}
where ${{\small [k]\triangleq k\ \mathrm{mod}\ M}}$,
$X_{m,n}^{(1)}$ are the signals transmitted by the MTs (to be
defined in the sequel), $Z_{m,n}^{(1)}$ denotes the additive noise
at the RT, and the inter-relay interference is
\begin{equation}
{\small T_{m,n}=\mu(X_{[m-1],n}^{(2)}+X_{[m+1],n}^{(2)})\ .}%
\label{eq: Inter-relay interference}%
\end{equation}
The received signal at the $m$th BS is
\begin{equation}
{\small Y_{m,n}^{(2)}=\gamma X_{m,n}^{(2)}+\eta(X_{[m-1],n}^{(2)}%
+X_{[m+1],n}^{(2)})+Z_{m,n}^{(2)}\ ,}%
\end{equation}
where $X_{m,n}^{(2)}$ are the signals transmitted by the RTs, and
$Z_{m,n}^{(2)}$ denotes the additive noise at the BS.

%Here we briefly describe the proposed transmission scheme. It is
%noted that for simplicity matters we focus on the symmetric rate
%$R_m=R$, which is dominated by the sum-rate constraint due to
%symmetrical considerations.

The proposed CF scheme works as follows (see Fig. \ref{fig: CF
scheme} where shadowed boxes indicate zero time processing). The
basic idea is to have the RTs compress the signal $Y_{m,n}^{(1)}$
received in any $n$th block (say $n=2$ in Fig. \ref{fig: CF
scheme}) and forward it in the $(n+1)$th block (e.g., $n+1=3)$ via
a channel codeword $X_{m,n+1}^{(2)}$, by exploiting the side
information available at the CP about the compressed signals
$Y_{m,n}^{(1)}$. In fact, with the proposed scheme, in the $n$th
block, the CP decodes the channel codewords ${X_{m,n}^{(2)}}$
transmitted by the RTs, and these are correlated with the signal
$Y_{m,n}^{(1)}$ \eqref{eq: RT received signal} via $T_{m,n}$
\eqref{eq: Inter-relay interference}. Based on this side
information, distributed CF is implemented at the RTs according to
\cite{Gastpar-IT04} via standard vector quantization and binning.
A more formal description of the CF scheme is presented
below.\newline \textbf{Code Construction}: 1) \emph{At the MTs}:
each $m$th MT generates a rate-$R_{\mathrm{cf}}$ Gaussian random channel
codebook ${{\small \mathcal{X}_{m}^{(1)}}}$ according to ${{\small
\mathcal{CN}(0,\rho_{1})}}$ (no optimality is claimed);
2) \emph{At the RTs}: 2.a) Each RT generates a rate-${{\small R_{\mathrm{w}%
}(\eta,\gamma,\rho_{2})}}$ Gaussian random channel codebook
${{\small \mathcal{X}_{m}^{(2)}}}$ according to ${{\small \mathcal{CN}%
(0,\rho_{2})}}$;
%where $R_{\mathrm{w}}(\eta,\gamma,\tilde{Q})$ is the
%Wyner symmetric capacity over the second hop (RTs to BSs) assuming optimal MCP reception.
2.b) Each RT generates a rate-${{\small
\hat{R}=I(Y^{(1)}_{m};U_{m})}}$ Gaussian quantization codebook
${{\small \mathcal{U}_{m}}}$ according to the marginal
distribution induced by
\begin{equation}
{\small U}_{m}{\small =Y}_{m}^{(1)}{\small +V}_{m}{\small ,}%
\label{eq: quantization}%
\end{equation}
where the quantization noises ${{\small V_{m}}}$ are i.i.d.
zero-mean complex Gaussian independent of all other random
variables (no optimality is claimed). Each quantization codebook
is randomly partitioned into ${{\small
2^{NR_{\mathrm{w}}(\eta,\gamma,\rho_{2})}}}$ bins, each of size
${{\small 2^{N(\hat{R}-R_{\mathrm{w}}(\eta,\gamma,\rho_{2}))}}}$%
;\newline\textbf{Encoding at the MTs}: each MT sends its message
${{\small W_{m,n}^{(1)}\in\mathcal{W}^{(1)}=\{1,\ldots,2^{NR_{\mathrm{cf}}}\}}}$
by
transmitting the $N$ symbol vector ${{\small \boldsymbol{X}_{m,n}^{(1)}}%
}=\mathcal{X}_{m}^{(1)}(W_{m,n}^{(1)})$ over the first
\textquotedblleft MT-RT" lag;\newline\textbf{Processing at the
RTs}: 1) \emph{Compressing}: each RT employs vector quantization
using standard joint typicality arguments via the quantization
codebok ${{\small \mathcal{U}_{m}}}$, to compress the
\emph{previously} received vector ${{\small
\boldsymbol{Y}_{m,n-1}^{(1)}}}$ into ${{\small
\boldsymbol{U}_{m,n}}}$ with the corresponding bin index ${{\small
W_{m,n}^{(2)}}}$; 2) \emph{Encoding}: each RT sends its bin index
${{\small W_{m,n}^{(2)}\in\mathcal{W}^{(2)}=\{1,\ldots,2^{NR_{\mathrm{w}}%
(\eta,\gamma,\rho_{2})}\}}}$ by transmitting ${{\small \boldsymbol{X}_{m,n}%
^{(2)}}}=\mathcal{X}_{m}^{(2)}(W_{m,n}^{(2)})$ over the second
\textquotedblleft RT-BS" lag;\newline\textbf{Decoding at the CP}:
1) \emph{Decoding the bin indices}: the CP collects the received
signal vectors
${{\small \boldsymbol{Y}_{\mathcal{M},n}^{(2)}}}$ (where ${{\small \mathcal{M}%
=\{0,1,\ldots,M-1\}}}$) from all the BSs through the backhaul
links. Then it decodes the resulting multiple-access channel (MAC)
using standard methods (e.g., \cite{Cover-Thomas-1991}) to recover
an estimate ${{\small \hat {W}_{\mathcal{M},n}^{(2)}}}$; 2)
\emph{Composing the side information}: the CP uses the decoded bin
indices ${{\small \hat{W}_{\mathcal{M},n}^{(2)}}}$ to compose the
side information vectors ${{\small
\boldsymbol{\hat{T}}_{\mathcal{M},n}}}$, where
${{\small \boldsymbol{\hat{T}}_{m,n}=\mu(\boldsymbol{\hat{X}}_{[m-1],n}%
^{(2)}+\boldsymbol{\hat{X}}_{[m+1],n}^{(2)})}}$, to be used in the
\emph{next} block; 3) \emph{Decoding the MTs messages}: The CP
uses the \emph{previous} side information
$\boldsymbol{\hat{T}}_{\mathcal{M},n-1}$ and looks for a unique
joint typical triplet ${{\small \{\boldsymbol{X}_{\mathcal{M},n-1}%
^{(1)},\boldsymbol{U}_{\mathcal{M},n},\boldsymbol{\hat{T}}_{\mathcal{M}%
,n-1}\}}}$ within the bins indicated by ${{\small \hat{W}_{\mathcal{M}%
,n}^{(2)}}}$, according to the joint distribution induced by
\eqref{eq: RT received signal}, to recover ${{\small \hat{W}_{\mathcal{M}%
,n-1}^{(1)}}}$.

\vspace{\myspace}
\section{Sum-Rate Analysis}
\vspace{\myspace}

Here we derive the per-cell sum-rate (or symmetric rate)
achievable via the proposed CF scheme.

\vspace{-0.3cm}
\begin{prop}\label{prop: CF rate}
An achievable per-cell sum-rate of the CF scheme employed in the
relay-aided Wyner circular uplink channel with infinite number of
cells ${\mysize M\rightarrow\infty}$, is given by
\begin{equation}\mysize\label{eq: CAF rate}
    R_{\mathrm{cf}}=R_{\mathrm{w}}\left(\alpha,\beta,\rho_1(1-2^{-r^*})\right)\ ,
\end{equation}
where $r^*\ge 0$ is the unique solution to the following fixed
point equation
\begin{equation}\mysize\label{eq: CAF FP equation}
R_{\mathrm{w}}\left(\alpha,\beta,\rho_1(1-2^{-r^*})\right)=R_{\mathrm{w}}(\eta,\gamma,\rho_2)-r^*\
.
\end{equation}
\end{prop}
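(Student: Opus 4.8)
The plan is to recognize the scheme as an instance of distributed compress-and-forward with decoder side information (in the spirit of \cite{Gastpar-IT04} and \cite[Thm.~3]{Kramer-Gastpar-Gupta-IT05}), to write down its two governing rate conditions, to evaluate the associated Gaussian mutual informations in the infinite-cell limit through the circulant/DFT diagonalization of the Wyner model together with Szeg{\"o}'s theorem (as in \cite{Wyner-94} and \cite{Sanderovich-Somekh-Shamai-ISIT07}), and finally to optimize over the quantization-noise variance, which produces the fixed-point equation \eqref{eq: CAF FP equation}. The first step is to identify the side information: since the ``RT-BS'' lag is itself a Gaussian Wyner MAC, the bin indices $W_{\mathcal{M},n}^{(2)}$ are conveyed reliably precisely because their per-cell rate is set equal to its per-cell sum-capacity $R_{\mathrm{w}}(\eta,\gamma,\rho_2)$; having decoded them, the CP reconstructs the RT codewords $\boldsymbol{X}_{\mathcal{M}}^{(2)}$, hence the inter-relay interference $\boldsymbol{T}_{\mathcal{M}}$ of the relevant block, with vanishing error probability. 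Thus $\boldsymbol{T}_{\mathcal{M}}$ acts as message-independent decoder side information, and the Wyner--Ziv-binned CF scheme is governed by (i) a compression/backhaul condition $\tfrac{1}{M}I(\boldsymbol{Y}_{\mathcal{M}}^{(1)};\boldsymbol{U}_{\mathcal{M}}\mid\boldsymbol{T}_{\mathcal{M}})\le R_{\mathrm{w}}(\eta,\gamma,\rho_2)$ and (ii) a decoding condition $R_{\mathrm{cf}}\le\tfrac{1}{M}I(\boldsymbol{X}_{\mathcal{M}}^{(1)};\boldsymbol{U}_{\mathcal{M}}\mid\boldsymbol{T}_{\mathcal{M}})$, where the block-Markov pipelining loss and the gap to the lag-2 sum-capacity both vanish as the number of blocks and $M$ grow.

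The second step is to evaluate the two mutual informations. All processes are jointly circulant-Gaussian with flat per-cell spectra (the MTs use flat power and do not cooperate, and the RT codewords are i.i.d.\ across cells and symbols), so the DFT diagonalizes the problem, with lag-1 transfer gain $\beta+2\alpha\cos2\pi f$ at frequency $f$, an interference spectrum proportional to $\cos^2 2\pi f$, and white quantization noise of variance $\sigma_V^2$. Conditioning on $\boldsymbol{T}_{\mathcal{M}}$ simply removes the interference contribution, leaving the clean Wyner lag-1 channel with noise enhanced from $\sigma_1^2$ to $\sigma_1^2+\sigma_V^2$, so Szeg{\"o}'s theorem gives, as in \eqref{eq: Wyner rate no WF},
\[
\tfrac{1}{M}I(\boldsymbol{X}_{\mathcal{M}}^{(1)};\boldsymbol{U}_{\mathcal{M}}\mid\boldsymbol{T}_{\mathcal{M}})=R_{\mathrm{w}}\!\left(\alpha,\beta,\tfrac{P}{\sigma_1^2+\sigma_V^2}\right).
\]
Using the chain rule $I(\boldsymbol{Y}^{(1)};\boldsymbol{U}\mid\boldsymbol{T})=I(\boldsymbol{X}^{(1)};\boldsymbol{U}\mid\boldsymbol{T})+I(\boldsymbol{Y}^{(1)};\boldsymbol{U}\mid\boldsymbol{T},\boldsymbol{X}^{(1)})$ and noting that, given $(\boldsymbol{T}_{\mathcal{M}},\boldsymbol{X}_{\mathcal{M}}^{(1)})$, the pair $(\boldsymbol{Y}_{\mathcal{M}}^{(1)},\boldsymbol{U}_{\mathcal{M}})$ reduces to a white additive-noise pair of signal-to-noise ratio $\sigma_1^2/\sigma_V^2$, the left side of (i) equals $R_{\mathrm{w}}(\alpha,\beta,P/(\sigma_1^2+\sigma_V^2))+\log_2(1+\sigma_1^2/\sigma_V^2)$.

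The third step is the optimization over $\sigma_V^2$. Writing $r\triangleq\log_2(1+\sigma_1^2/\sigma_V^2)\ge0$, so that $P/(\sigma_1^2+\sigma_V^2)=\rho_1(1-2^{-r})$, condition (ii) becomes $R_{\mathrm{cf}}\le R_{\mathrm{w}}(\alpha,\beta,\rho_1(1-2^{-r}))$ and condition (i) becomes $R_{\mathrm{w}}(\alpha,\beta,\rho_1(1-2^{-r}))+r\le R_{\mathrm{w}}(\eta,\gamma,\rho_2)$. Both terms on the left of (i) increase as $\sigma_V^2$ decreases, so the largest achievable $R_{\mathrm{cf}}$ follows from taking $\sigma_V^2$ as small as (i) permits, i.e., with (i) tight; the corresponding $r$ is then the root $r^*$ of \eqref{eq: CAF FP equation}, and condition (ii) delivers \eqref{eq: CAF rate}. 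This root exists and is unique since the left side of \eqref{eq: CAF FP equation} is continuous and strictly increasing in $r$, from $0$ up to the finite limit $R_{\mathrm{w}}(\alpha,\beta,\rho_1)$, while the right side is continuous and strictly decreasing, from $R_{\mathrm{w}}(\eta,\gamma,\rho_2)>0$ down to $-\infty$.

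The step I expect to be the main obstacle is not any of these computations but the justification, in the $M\rightarrow\infty$ circular model, that the full family of subset (Berger--Tung-type) compression constraints collapses to the single per-cell constraint (i), so that no intermediate cluster of cells is ever binding; this is exactly where the circulant/spectral machinery of \cite{Sanderovich-Somekh-Shamai-ISIT07} is needed, after which the Szeg{\"o}-based mutual-information evaluations and the handling of the pipelining loss are routine.
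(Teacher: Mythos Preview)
Your proposal is correct and follows essentially the same route as the paper: both observe that the decoded RT codewords give the CP the inter-relay interference $T_{\mathcal{M}}$ as side information, reduce (by conditioning on it) to the interference-free limited-backhaul Wyner uplink of \cite{Sanderovich-Somekh-Shamai-ISIT07}, and read off the fixed-point equation from there. The only difference is presentational: the paper makes the reduction explicit via a joint-typicality union bound over error subsets $(\mathcal{L},\mathcal{S})$ and entropy/Markov manipulations showing the resulting constraints depend only on $\tilde{Y}_m=Y_m-T_m$ and $\tilde{U}_m=\tilde{Y}_m+V_m$, whereas you state the two per-cell summary conditions directly and (rightly) flag the collapse of the full subset family as the step that needs \cite{Sanderovich-Somekh-Shamai-ISIT07}.
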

%\vspace{-0.1cm}
\begin{proof}(outline) See Appendix \ref{appx: MCP AF sum-rate}.
\end{proof}
%\vspace{\myspace} \vspace{\myspace}
%\begin{rem}
%\emph{It is noted that the result holds even if we relax the
%perfect echo-cancellation assumption as long as the MCP is aware
%of the residual echo power gain ${\mysize\xi^2}$. In this case the
%inter- and intra-relay interfering signal is
%${\mysize\tilde{T}_{m,n}= T_{m,n}+\xi X^{(1)}_{m,n}}$. The
%encoding-decoding scheme of Section \ref{sec: Distributed
%Compress-and-Forward Scheme} remains the same except that now the
%MCP uses the decoded messages over the second hop to reconstruct
%${\mysize\tilde{T}_{m,n}}$, and the joint distribution (induced by
%\eqref{eq: RT received signal}) being used during the
%decompression phase is slightly changed.}
%\end{rem}\vspace{\myspace}
It is concluded that the rate ${\mysize R_{\mathrm{cf}}}$ is
\emph{independent} of the inter-relay interference. Moreover, the
CF scheme performs as if there are no inter-relay interferences
(i.e. ${\mysize\mu=0}$) and its rate coincides with the results of
\cite{Sanderovich-Somekh-Shamai-ISIT07} interpreting the second
``RT-BS" lag as the backhaul network with limited capacity
${\mysize C=R_{\mathrm{w}}(\eta,\gamma,\rho_2)}$. Also note, that
the result holds even if we relax the RT perfect echo-cancellation
assumption as long as the CP is aware of the residual echo power
gain.

Since $R_{\mathrm{w}}$ is given in an implicit integral form
\eqref{eq: Wyner rate no WF}, we can not solve the fixed point
equation \eqref{eq: CAF FP equation} analytically. Nevertheless,
since ${\mysize
R_{\mathrm{w}}\left(\alpha,\beta,\rho_1(1-2^{-r})\right)}$ is
monotonic in ${\mysize r}$, \eqref{eq: CAF FP equation} is easily
solved numerically. It is also evident that the CF rate increases
with the relay power ${\mysize Q}$. Hence, as with the AF scheme
full relay power usage is optimal.

It is easily verified that when ${\mysize\rho_1\rightarrow\infty}$
then ${\mysize r^*\rightarrow 0}$, and ${\mysize R_{\mathrm{cf}}}$
does not achieve the upper bound \eqref{eq: Cut-set bound}. This
is since ${\mysize
R_{\mathrm{cf}}\underset{\rho_1\rightarrow\infty}\rightarrow
R_{\mathrm{w}}(\eta,\gamma,\rho_2)\le
R^{\mathrm{wf}}_{\mathrm{w}}(\eta,\gamma,\rho_2)}$. On the other
extreme when ${\mysize\rho_2\rightarrow\infty}$ then ${\mysize
r^*\rightarrow\infty}$, and ${\mysize R_{\mathrm{cf}}}$ achieves
the upper bound ${\mysize
R_{\mathrm{cf}}\underset{\rho_2\rightarrow\infty}\rightarrow
R_{\mathrm{w}}(\alpha,\beta,\rho_1)}$.
%It is easily verified that when $\rho_2\rightarrow\infty$ then
%$r^*\rightarrow\infty$, and $R_{\mathrm{cf}}$ achieves the upper
%bound
%$R_{\mathrm{ub}}\underset{\rho_2\rightarrow\infty}\rightarrow
%R_{\mathrm{w}}(\alpha,\beta,\rho_1)$, which is the per-cell
%sum-rate capacity of the first ``Wyner lag''. On the other extreme
%when $\rho_1\rightarrow\infty$ then $r^*\rightarrow 0$, and
%$R_{\mathrm{cf}}$ does not achieve the upper bound. This is since
%$R_{\mathrm{ub}}\underset{\rho_1\rightarrow\infty}\rightarrow
%R_{\mathrm{w}}(\eta,\gamma,\rho_2)\le
%R^{\mathrm{wf}}_{\mathrm{w}}(\eta,\gamma,\rho_2)$.
%Since in this extreme case the rate of the CF scheme achieves the
%upper-bound, it is obvious that it does not perform worse than the
%AF scheme when ${\mysize\rho_2\rightarrow\infty}$. Moreover, in
%the next section, numerical results reveal that the CF scheme
%outperforms the AF scheme over a wide range of the system
%parameters.
In the next section, numerical results reveal that the CF scheme
outperforms the AF scheme over a wide range of the system
parameters.

\vspace{\myspace}
\section{Numerical Results}
\vspace{\myspace}

\begin{figure}[tb]
\begin{center}
\includegraphics[scale=0.42]{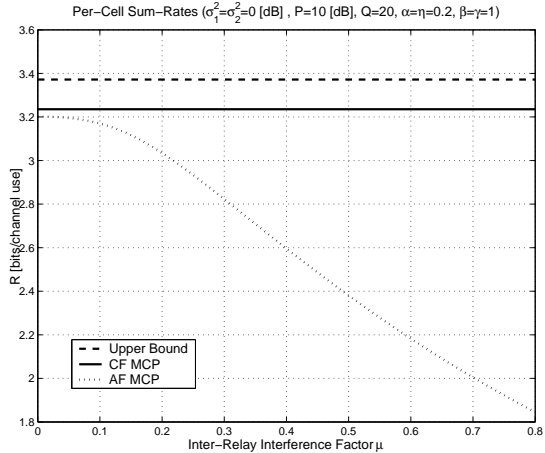}
\end{center}\vspace{-0.3cm}
\caption{Rates vs. the inter-relay interference factor $\mu$
(symmetrical hops).} \label{fig: Sum-rates vs mu}
\end{figure}

\begin{figure}[tb]
\begin{center}
\includegraphics[scale=0.42]{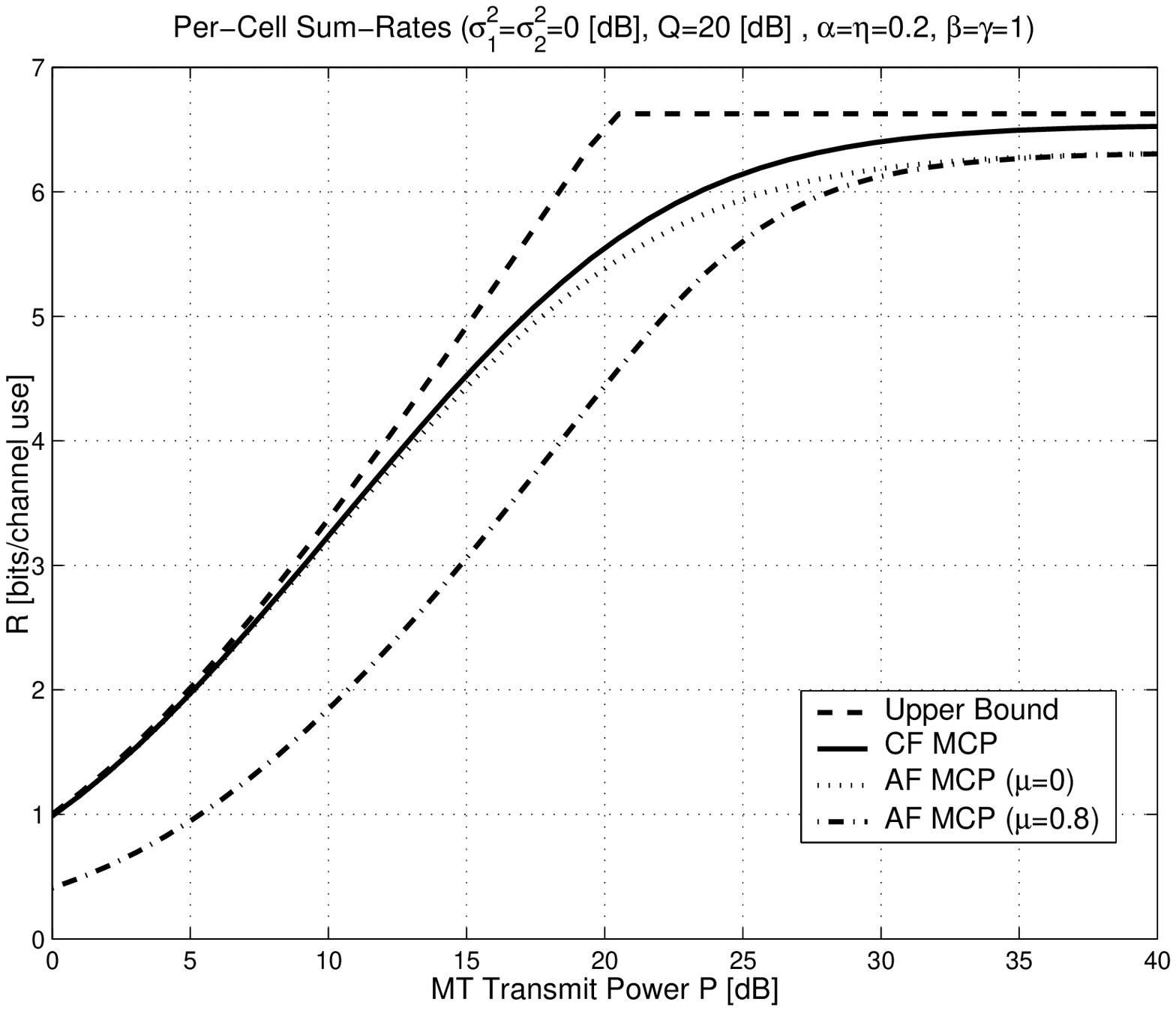}
\end{center}\vspace{-0.3cm}
\caption{Rates vs. the MTs transmission power $P$ (symmetrical
hops).} \label{fig: Sum-rates vs P sym}
\end{figure}

\begin{figure}[tb]
\begin{center}
\includegraphics[scale=0.42]{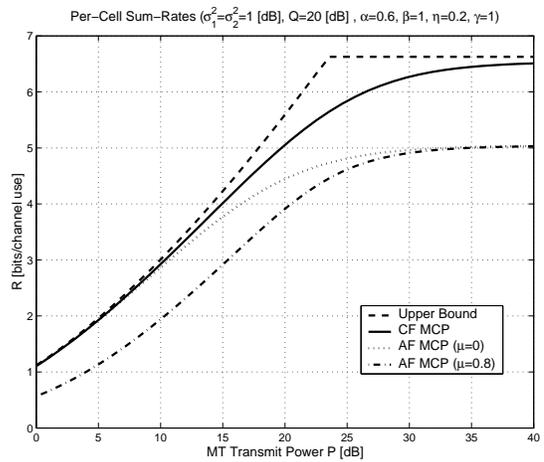}
\end{center}\vspace{-0.3cm}
\caption{Rates vs. the MTs transmission power $P$ (asymmetrical
hops).} \label{fig: Sum-rates vs P asym}
\end{figure}

%The simple two hop network at hand includes quite a few system
%parameters. In this presentation, due to the limited space, we
%focus on a few cases of interest. A more comprehensive discussion
%in which more cases are considered is reported in
%\cite{Somekh-Simeone-Poor-Shamai-Prep2007}.

In Fig. \ref{fig: Sum-rates vs mu} the per-cell sum-rates of the
CF and AF schemes are plotted along with the upper bound
\eqref{eq: Cut-set bound}, as functions of the inter-relay
interference factor ${\mysize\mu}$ for
${\mysize\rho_1=P/\sigma^2=10}$ [dB], ${\mysize\rho_2=Q/\sigma^2=
20}$ [dB], ${\mysize\sigma^2_1=\sigma^2_2=\sigma^2=1}$,
${\mysize\alpha=\eta=0.2}$, and ${\mysize\beta=\gamma=1}$. It is
noted that the AF curve is plotted with optimal relay gain
(resulting in a full usage of the relay power ${\mysize Q}$).
Examining the figure, the benefits of the CF scheme are evident in
view of the deleterious effect of increasing inter-relay interference
$\mu$ on the AF rate; the CF scheme provides almost twice the bits per
channel use than the AF rate, for strong inter-relay interference
levels. Also visible from the figure is the proximity of the CF
rate to the upper bound (less than 0.2 bits per channel use) which
for this setting is dominated by the rate of the first ``MT-RT"
lag.

Figures \ref{fig: Sum-rates vs P sym} and \ref{fig: Sum-rates vs P
asym} present the CF and AF rate curves and the upper-bound as
functions of the MTs power ${\mysize P}$, for
%${\mysize\rho_1=P/\sigma^2=10}$ [dB],
${\mysize\rho_2=Q/\sigma^2=20}$ [dB],
${\mysize\sigma^2_1=\sigma^2_2=1}$, and ${\mysize\beta=\gamma=1}$.
Here, we focus on two scenarios: (a) a setting with symmetrical
first and second lags, i.e. ${\mysize\alpha=\eta=0.2}$, and (b) a
setting with asymmetrical lags, i.e. ${\mysize\alpha=0.6,\
\eta=0.2}$. In both figures we include the AF rate curves for the two
extremes ${\mysize\mu=0}$ and ${\mysize\mu=0.8}$, which represent
weak and strong inter-relay interference scenarios, respectively.
It is noted that any AF rate curve with ${\mysize 0<\mu<0.8}$ is
confined between these two curves. Examining the figures it is
observed that the CF performs well (within one bit per channel use
of the upper bound) in both scenarios over the entire displayed
range of MTs power ${\mysize P}$. On the other hand, the AF scheme
performs well in both scenarios only for low inter-relay
interference levels and low MTs power ${\mysize P}$.
%On the other hand, the AF scheme
%performs well for low inter-relay interference levels and low MTs
%power ${\mysize P}$ in both scenarios, and for high ${\mysize P}$
%in the symmetrical scenario only.
Other results (not presented here) show that the AF scheme is
slightly beneficial over the CF scheme under certain conditions
(e.g., high ${\mysize P}$, and asymmetrical lags
${\mysize\alpha=0.2,\ \eta=0.6}$).

%Other results (not presented here) show that the CF is not
%superior over the AF scheme under any conditions. It is found that
%the AF scheme is slightly beneficial over the CF scheme for high
%MT power ${\mysize P}$, and the reverse asymmetry of Fig.
%\ref{fig: Sum-rates vs P asym}, i.e. a scenario with
%${\mysize\alpha=0.2,\ \eta=0.6}$, regardless of the inter-relay
%interference level.

\vspace{\myspace}
\section{Concluding Remarks}
\vspace{\myspace}

In this work we have considered a simplified two-hop cellular
setup with FD relays and inter-relay interference. Focusing on
nomadic application, a form of distributed CF with decoder side
information scheme, has been analyzed. We have derived the
achievable per-cell sum-rate for an infinitely large number of
cells, and have shown that the CF scheme totally eliminates the
inter-relay interference. Numerical results reveal that the CF
rate curves are rather close to a ``cut-set-like" upper bound, and
also demonstrate the superiority of the proposed CF scheme over
the MCP AF scheme of \cite{Somekh-Simeone-Poor-Shamai-ISIT2007},
for a wide range of the system parameters. \vspace{-0.1cm}
\appendix
\vspace{\myspace}
\subsection{Proof of Proposition \ref{prop: MCP AF sum-rate} (Outline)}\label{appx: MCP AF sum-rate}
\vspace{\myspace}

We focus on the decoding stage at the CP for the ${\small n}$th
block (recall Fig. \ref{fig: CF scheme}). Since the rate of the
channel codebooks used by the RTs on the second lag is equal to
the per-cell capacity ${\small R}_{w}$ of the corresponding Wyner
channel (see Sec. \ref{sec: Wyner capacities}), the CP is able to
correctly decode ${\small W}_{\mathcal{M},n-1}^{(2)}$ from the
previous block and ${\small W}_{\mathcal{M},n}^{(2)}$ from the
current with high probability. Based on the former, it can also
build an accurate estimate ${\small
\hat{\vct{T}}}_{\mathcal{M},n-1}.$ As per Fig. \ref{fig: CF
scheme}, the CP then attempts to decode the messages ${\small
W}_{\mathcal{M},n-1}^{(1)}.$ In the following, the variables of
interest are ${{\small Y_{m,n-1}^{(1)}}}$, ${{\small U_{m,n}}}$
and ${{\small X_{m,n-1}^{(1)}}}$ which are denoted for simplicity
as ${{\small Y_{m}}}$, ${{\small U_{m}}}$ and ${{\small X_{m}.}}$
To elaborate, we note that, due to the quantization rule (\ref{eq:
quantization}), the following Markov relation holds ${{\small
\{X_{\mathcal{M}},U_{\mathcal{M}\setminus
m},T_{\mathcal{M}}\}-Y_{m}-U_{m}.}}$ Recall also that the CP
decodes ${{\small X_{\mathcal{M}}}}$ by looking for jointly
typical sequences ${{\small
\{\mathbf{X}_{\mathcal{M}},\mathbf{U}_{\mathcal{M}},\mathbf{\hat{T}}_{\mathcal{M}}\},}}$
where ${{\small X_{\mathcal{M}}}}$ belong to the MTs codebooks
(each of size ${{\small 2^{NR_{\mathrm{cf}}}}}$) and ${{\small
U_{\mathcal{M}}}}$ are within the bins (of size ${{\small
2^{N(\hat{R}-R_{\mathrm{w}}(\eta ,\gamma,\rho_{2}))}}}$) whose
indices are given by ${\small W}_{\mathcal{M},n}^{(2)}$.

Assuming ${{\small \hat{R}\geq I(Y_{m};U_{m})}}$, for large block
length ${{\small N}}$, the probability of error is dominated by
the events where a set with erroneous $\mathbf{X}{{\small
_{\mathcal{L}}}}$ and $\mathbf{U}{{\small _{\mathcal{S}}}}$, for
any subsets ${{\small \mathcal{L},\mathcal{S\subseteq M}}}$, is
found that is jointly typical in the sense explained above (see
\cite{Sanderovich-Somekh-Shamai-ISIT07}). Using the union bound,
we found that the error probability is bounded
\[
{\small \begin{aligned} &P_{e}\leq \sum_{\mathcal{L},\mathcal{S\subseteq M}}2^{NR_{\mathrm{cf}}|\mathcal{L}|+N(\hat{ R}-R_{w})|\mathcal{S}|}\\ &\quad\quad\quad\cdot 2^{N(h(X_{\mathcal{L}},U_{\mathcal{S}}|X_{\mathcal{L}^{c}},U_{\mathcal{S}^{c}},T_{\mathcal{M}})-|\mathcal{L}|h(X)-|\mathcal{S}|h(U))}. \end{aligned}}%
\]
It follows that, in order to drive the probability of error to
zero, it is sufficient that
\begin{equation}
{\small \begin{aligned} &|\mathcal{L}|R_{\mathrm{cf}}+|\mathcal{S}|(\hat{R}-R_{w})\leq\\ &-h(X_{\mathcal{L}},U_{\mathcal{S}}|X_{\mathcal{L}^{c}},U_{\mathcal{S}^{c}},T_{\mathcal{M}})+ |\mathcal{L}|h(X_{m})+|\mathcal{S}|h(U_{m}). \end{aligned}}%
\label{ineq}
\end{equation}
Now, defining ${{\small \tilde{Y}_{m}=Y_{m}-T_{m}}}$ and ${{\small
\tilde {U}_{m}=\tilde{Y}_{m}^{(1)}+V_{m}}}$, and using the Markov
properties of the compression scheme, \ we have that
\begin{equation}
{\small \begin{aligned} I(Y_{m};U_{m}) &=h(U_{m})-h(U_{m}|Y_{m})\\ &=h(U_{m})-h(U_{m}|Y_{m},X_{\mathcal{M}},T_{\mathcal{M}})\\ &=h(U_{m})-h(\tilde{U}_{m}|\tilde{Y}_{m},X_{\mathcal{M}})\ , \end{aligned}}%
\label{one}
\end{equation}
and
\begin{equation}
{\small \begin{aligned} &h(X_{\mathcal{L}},U_{\mathcal{S}}|X_{\mathcal{L}^{c}},U_{\mathcal{S}^{c}},T_{ \mathcal{M}})=\\ &\quad=h(X_{\mathcal{L}}|X_{\mathcal{L}^{c}},U_{\mathcal{S} ^{c}},T_{\mathcal{M}})+h(U_{\mathcal{S}}|X_{\mathcal{M}},U_{\mathcal{S} ^{c}},T_{\mathcal{M}})\\ &\quad=h(X_{\mathcal{L}}|X_{\mathcal{L}^{c}},U_{\mathcal{S}^{c}},T_{\mathcal{M} })+|\mathcal{S}|h(U_{m}|X_{\mathcal{M}},T_{\mathcal{M}})\\ &\quad=h(X_{\mathcal{L}}|X_{\mathcal{L}^{c}},\tilde{U}_{\mathcal{S}^{c}})+| \mathcal{S}|h(\tilde{U}_{m}|X_{\mathcal{M}})\ , \end{aligned}}%
\label{two}
\end{equation}
and it is also easy to prove that
\begin{equation}
{\small |\mathcal{L}|h(X_{m})=h(X_{\mathcal{L}})=h(X_{\mathcal{L}
}|X_{\mathcal{L}^{c}})\ .}\label{three}
\end{equation}

Using \eqref{one}-\eqref{three} in \eqref{ineq} and dropping the
subscript denoting the cell index for symmetry, we get
\begin{equation*}\small\label{}
    |\mathcal{L}|R_{\mathrm{cf}}\leq|\mathcal{S}|(R_{\mathrm{w}}-I(\tilde{U};\tilde
{Y}|X_{\mathcal{M}
}))+I(X_{\mathcal{L}};\tilde{U}_{\mathcal{S}^{c}
}|X_{\mathcal{L}^{c}})\ ,
\end{equation*}
%\[
%{\small
%|\mathcal{L}|R\leq|\mathcal{S}|(R_{\mathrm{w}}-I(\tilde{U};\tilde
%{Y}|X_{\mathcal{M}
%}))+I(X_{\mathcal{L}};\tilde{U}_{\mathcal{S}^{c}
%}|X_{\mathcal{L}^{c}}),}
%\]
which corresponds to the result in
\cite{Sanderovich-Somekh-Shamai-ISIT07} by substitution of
${{\small \tilde{U}}}$ and ${{\small \tilde{Y}}}$ with ${{\small
U}}$ and ${{\small Y}}$, and the proof is completed by following
\cite{Sanderovich-Somekh-Shamai-ISIT07}.
%-------------------------------------------------------------------------
% Acknowledgment

\vspace{\myspace}
\section*{Acknowledgment}
\vspace{\myspace}

The research was supported by a Marie Curie Outgoing International
Fellowship and the NEWCOM++ network of excellence both within the 
6th and 7th European Community Framework Programmes, by the U.S. National
Science Foundation under Grants ANI-03-38807 and CNS-06-25637, and
the REMON consortium for wireless communication.

% ------------------------------------------------------------------------
% Bibliography
%\vspace{-0.1cm}
%\bibliographystyle{ieeetr}
%\bibliography{Reference_List}

\end{document}